\documentclass[11pt, a4paper]{article}
\usepackage{amsmath, amssymb, amsfonts, amsthm}
\usepackage{color}
\usepackage{bm}
\usepackage{subfigure}
\usepackage{graphicx}
\usepackage{mathabx}
\usepackage{multirow}
\usepackage{setspace}
\usepackage{algorithmic, epsf}
\usepackage{graphicx, color}
\usepackage{paralist}
\usepackage{float}
\usepackage{algorithm}
\usepackage{lineno}
\usepackage{verbatim}

\usepackage{fullpage}

\title{Small Model $2$-Complexes in $4$-Space and Applications\footnote{This resreach was done while the author was at IST Austria, and supported by TOPOSYS project FP7-ICT-318493-STREP.}}

\author{Salman Parsa\footnote{\'Ecole normale sup\'erieure, Paris, France. \texttt{parsa@di.ens.fr}}}
\date{}

\begin{document}

\maketitle
\newcommand{\Fspace}        {\mathbb F}
\newcommand{\Rspace}        {\mathbb R}
\newcommand{\Zring}        {\mathbb Z}
\newtheorem{observation}{Observation}
\newtheorem{lemma}{Lemma}
\newtheorem{claim}{Claim}
\newtheorem{Theorem}{Theorem}
\newtheorem{theorem}{Theorem}
\newtheorem{corollary}{Corollary}

\begin{abstract}
We consider computational complexity of problems related to the fundamental group and the first homology group of (embeddable) $2$-complexes. We show, as
an extension of an earlier work, that computing first homology of $2$-complexes is equivalent in computational complexity to matrix diagonalization. That is, the usual procedures
for computing homology cannot be improved other than by matrix methods. This is true even if the complex is in the euclidean $4$-space. For this purpose, we use $2$-complexes built in a standard way from group presentations, called model $2$-complexes. Model complexes have fundamental group isomorphic with the group defined by the presentation. 
We show that there are model complexes of size in the order of the bit-complexity of the presentation that can be realized linearly in $4$-space. We further derive some applications of this result regarding embeddability problems in the euclidean $4$-space.
\end{abstract}

\section{Introduction}
We study the fundamental groups of simplicial $2$-complexes which are
geometrically realized in the $4$-dimensional euclidean space. We approach this topic from a computational complexity point of view. It is known that a $2$-complex in the $3$-space cannot have torsion in the first and second homology groups. Thus, this embeddability property restricts the first homology group (hence also the fundamental group). The embeddability into $\Rspace^3$ also makes the computation of the homology groups and Betti numbers more efficient using Alexander duality \cite{DeEd95}. It is shown in \cite{EdPa14} that, from a computational point of view, no restriction applies to the first $\Zring_2$-homology of $2$-complexes embedded in $\Rspace^4$. This was done by reducing matrix rank computation to computing Betti numbers of $2$-complexes in $\Rspace^4$. For such a reduction in complexities of algorithms, one needs embedded complexes with given homology groups 
that moreover can be constructed efficiently, in almost linear time, from the given input. The input in the case of \cite{EdPa14} was a 0-1 matrix $M$, which was considered as a system of equations $Mx=0$, where $x$ is a 
vector of variables. Then an embedded simplicial complex was constructed in linear time such that its second $\Zring_2$-homology group is isomorphic with the null-space of $M$.
Now a generalization of this reduction of rank computation to homology computation of complexes in $\Rspace^4$ is to consider $\Zring$ coefficients and integer matrices, or 
in a more general setting, one can study what embeddability implies regarding the fundamental group. Such a generalization was the purpose of this work.

We are given a finite presentation $P = <g_1, \ldots, g_n ; r_1, \ldots, r_m>$ of a group, and, our goal is to construct a complex 
in $\Rspace^4$ whose fundamental group is isomorphic with the given group. In order to be useful for the reduction of algorithmic problems, the
embedding needs to be constructive. Moreover, we need the whole construction take an almost linear time. This is because we want to argue that, 
in terms of computational complexity, for problems about the fundamental group or the first homology group, we gain nothing when the $2$-complex is given as realized in $\Rspace^4$, compared to
an abstract $2$-complex.

We broadly define a \textit{model simplicial complex} to be one that is constructed in a standard way (i.e., by a fix algorithm) from a presentation of a group, and whose fundamental group is isomorphic with the presented group. This is in contrast to the common definition, for instance in \cite{HAMe93}, and more suited to our purposes. 
In the literature, there exist proofs of the fact that model $2$-complexes embed in $\Rspace^4$. The first mention of this result is in a statement\footnote{ No proof is given in \cite{Cur62} and Stallings also refers this case to Curtis \cite{Cur62}.} by Curtis \cite{Cur62}. Later Stallings generalized this fact and showed that any homotopy type of $n$-complexes
is embeddable in $\Rspace^{2n}$ \cite{Sta65}. See \cite{DrRe93} for a simple proof of Stalling's theorem. There are also results in the literature on embeddings of $2$-complexes with certain properties for the complementary space. In \cite{Huc90}, together with giving another proof of embeddability of model complexes, it is shown that acyclic $2$-complexes can be embedded with contractible complement, see also references in it. For a survey on possible thickenings of $2$-complexes in various dimensions see \cite{HAMe93}. 

In this article, we give an explicit linear embedding of \textit{a} model complex which is small, as is needed in the reduction arguments and to be made precise below, see Theorem \ref{theorem:bit}.
The embedding results are formalized and proved in Theorems \ref{theorem:main} and \ref{theorem:bit}. Our proof was derived independently from the existing literature.
We note that although our argument has turned out to be similar in nature to \cite{DrRe93}, the approach we use needs a modification of the given presentation (see the beginning of Section \ref{section:proofs}), and, we build the complex for the modified presentation. The embedding of the model complex is then linear. In addition, we mention some interesting corollaries of the main theorem, which are partly well-known, as examples of possible applications of such simple embeddability arguments, namely, Corollary \ref{cor:manifold} and Theorem \ref{theorem:htype}. We also generalize a simple theorem on undecidability of a relative embedding problem for dimensions $d \geq 4$, proved in \cite{FrKr14} for $d \geq 5$.

It is not difficult to compute the complexity
of embedding procedures presented in some of the existing literature, and they are at best linear in the unary size of the presentation (as defined in Section \ref{section:theorems}), and, not ``small'' in our sense. For us, \textit{small} means that the complex has number of simplices proportional to the bit-complexity
of the presentation, i.e., the number of bits required to encode the presentation. 
Certain other interests also exist in having small model complexes in euclidean $4$-space, see for instance \cite{LuZi08}.

One implication of the embeddability of model complexes in $\Rspace^4$ is that an algorithm for deciding whether a given $2$-complex (PL) embeds in $\Rspace^4$ does not decide any property of the fundamental group. This has been also one of the motivations for the present work.

\paragraph*{Complexity of Integral Homology Computation (for $2$-complexes in $\Rspace^4$)}
In \cite{EdPa14} the complexity of computing $\Zring_2$-Betti numbers was considered. A version of the main result of this paper can be expressed as generalizing the result of \cite{EdPa14} to integral homology computation.
It reads as follows. For any given abelian group (as a presentation), one can build
in almost linear time\footnote{up to a logarithmic factor, see definition of binary size of a presentation.} in \textit{size of the input} a $2$-complex in $\Rspace^4$ whose first integral homology group is isomorphic with the given group. Therefore, computing a decomposition
of an abelian group from its presentation reduces to computing first homology group of a complex in $\Rspace^4$. Where ``computing the homology group'' means
as usual expressing the homology group as a direct sum of cyclic groups from the boundary matrices. Note that for this result one needs the reduction to have complexity roughly bounded
by the bit-complexity of the input, hence the need for small model complexes.

The presentation of the abelian group can be thought of as a system of linear equations over the integers $Mx=0$. The matrix $M$ defines a linear transformation from a direct sum of $\Zring$ summands to again a direct
sum of $\Zring$ summands; $M:X \rightarrow X$. The group presented is $X/ \ker(M)$. The results proved here then shows that the group $X/ \ker(M)$ is the homology group of a $2$-complex
in $\Rspace^4$, which can be constructed in time almost linear in the bit-complexity of $M$. Observe that presentation defined by $M$ does not include relations for commutativity of the generators,
however, these are only necessary if we want to have a complex whose fundamental group is isomorphic with $X/ \ker(M)$.

On the other hand, since the homology of a $2$-complex is computed by diagonalization of a constant number of integer matrices, the above considerations imply that computing the homology of embedded complexes in $\Rspace^4$ and integer matrix diagonalization are equivalent in terms of computational complexity.

\paragraph*{}
The rest of the paper is organized mainly into two sections. In the next section we state the formal definitions and theorems and
the second section includes the proofs.

\section{Theorems}\label{section:theorems}
Let $P = \langle g_1, g_2, \ldots, g_n; r_1, \ldots, r_m \rangle$ be a presentation of the group $G = G(P)$ with generators $g_i$ and relations $r_j$. By definition, the
group $G$ is the quotient group $\langle g_1, \ldots,g_n \rangle  / (r_1, \ldots, r_m)$, where $ \langle g_1,\ldots,g_n \rangle $ is the free group generated by the $g_i$, and, $(r_1, \ldots, r_m)$
is the smallest normal subgroup containing the elements represented by the words $r_j$. To prevent some complications, we assume that the given presentation always has at least $n$ relations, hence,
free groups could be presented with $n$ copies of the symbol denoting the empty word for the $r_i$.

Given a presentation $P$ there is a well-known procedure for building a $2$-dimensional CW complex $K = K(P)$ whose fundamental group $\pi_1(K,x_0)$, with basepoint $x_0 \in |K|$, is isomorphic with the group $G$. One first creates a wedge $W$ of $n$ directed circles $g_1, \ldots, g_n$. The basepoint is the wedge point. For each element $r_j$, $j=1,\ldots,m$, one takes a $2$-disk $D_j$ and attaches its boundary to the complex along the loop $r_j$ based at the wedge point. That is, by a map $f: S^1 = \partial D_j \rightarrow W$ such that $f$ represents the element represented by the word $r_j$ in $\pi_1(W)$ based at the wedge point. It follows from the Van Kampen-Seifert theorem that the fundamental group of the resulting complex is isomorphic with $G$. See \cite{Si93} for details of this construction and for an extensive account of algebraic topology of $2$-dimensional complexes.

The construction above can be done such that $K(P)$ is a simplicial complex. We explain a specialized construction of a simplicial complex adapted to our needs. This procedure defines our notion of the model complexes. We first take a wedge of simplicial circles, say each with $c$ edges. For each relation $r_j$, $j=1,\ldots,m$, we do the following. Let $r_j = g'_1 \dotsb g'_{k_j}$, where for $l = 1, \ldots,k_j$, $g'_l=g_i$ or $g'_l = g^{-1}_i$ for some $i$. We attach an annulus, denoted $A_j$, consisting of $c k_j$ rectangles to $W$ along its \emph{lower boundary}, where the \emph{lower edge} of each rectangle is identified with an edge of $W$. The lower boundary of the annulus is attached such that the attaching map represents $r_j$. In the next step, a disk is attached to the \emph{upper boundary} of the annulus by a degree 1 map to finish attaching the disk for $r_j$. We note that the size of the resulting complex is $O(n+\sum_{j=1}^{m} k_j)$. We call the number $s(P) = n+\sum_{j=1}^{m} k_j$ the \emph{unary size}\footnote{Note that to encode
the presentation one needs an additional $\log{n}$ multiplicative factor for addressing the generators, for simplicity we suppress this factor throughout.} of the presentation $P$.

We say that a simplicial complex is \emph{realized} in $\Rspace^4$ if it is
simplex-wise linearly embedded. We now state the main theorem.

\begin{theorem}\label{theorem:main}
There is an algorithm that, given any presentation $P$, constructs the model simplicial complex $K = K(P)$ realized in $\Rspace^4$. Moreover, the run-time of the algorithm (and hence size of $K$) is of the order of the unary size of $P$. 
\end{theorem}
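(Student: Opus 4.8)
The plan is to construct the model complex $K(P)$ piece by piece inside $\Rspace^4$, using a small number of coordinate directions for the different structural components of the complex, so that the pieces can be arranged to avoid intersections by a dimension/genericity argument while keeping the whole thing simplex-wise linear. First I would place the wedge $W$ of $n$ simplicial circles, each with $c$ edges (for a suitable small constant $c$, e.g. $c=3$), in a single $2$-plane, say the $(x_1,x_2)$-plane, with the wedge point at the origin; this is clearly linearly embeddable and has size $O(n)$. The circles can be laid out, for instance, as small polygons attached at the origin along a line, so that distinct circles are disjoint except at the wedge point.

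Next I would attach, for each relation $r_j = g'_1 \cdots g'_{k_j}$, the annulus $A_j$ consisting of $ck_j$ rectangles (each triangulated into two triangles). The key idea is to ``lift'' the annulus into the extra dimensions: the lower boundary of $A_j$ is a loop reading $r_j$ which must lie in $W$, but the rest of $A_j$ can be pushed off $W$ into the $x_3$-direction and given a height/parameter in the $x_4$-direction that separates it from the other annuli. Concretely, I would give $A_j$ a collar that moves monotonically in $x_3$ as one goes from the lower to the upper boundary, and assign the annulus a distinct ``slot'' value (or a generic affine function) in $x_4$ so that $A_i$ and $A_j$ for $i \neq j$ occupy disjoint regions away from $W$. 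Since each annulus, apart from its $1$-dimensional lower boundary sitting in $W$, is a $2$-complex being placed into the $3$-dimensional slice $\{x_1,x_2,x_3\}$ fibered over $x_4$, generic linear placement makes the relative interiors pairwise disjoint; one must only check that an annulus does not re-intersect $W$ away from its lower boundary, which is arranged by the strict monotonicity in $x_3$.

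Finally, for each $j$ I would attach a disk $D_j$ to the upper boundary of $A_j$ by a degree-$1$ map. The upper boundary of $A_j$ is a simplicial circle sitting at the ``top'' of the annulus, embedded in a $3$-dimensional affine slice; a polygon in $\Rspace^3$ (indeed any simplicial circle on the boundary of a region) bounds a PL disk, and I would cone it off to an interior point pushed slightly further in $x_3$ (or into $x_4$), triangulating the cone. Again a genericity argument, using that we have a full extra dimension of freedom per annulus, ensures these disks are embedded and disjoint from everything else. The total number of simplices introduced is $O(c k_j)$ per relation and $O(cn)$ for $W$, giving size $O(n + \sum_j k_j) = O(s(P))$, and every step — laying out vertices on a grid, coning, assigning slots — is doable in time linear in $s(P)$.

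The main obstacle I expect is not the counting or the linearity, but verifying disjointness cleanly: one needs a single coherent coordinate scheme in which (i) the lower boundaries of all the $A_j$ genuinely lie in $W$ (so the attaching maps are correct and $\pi_1$ comes out right by Van Kampen), yet (ii) everything else is pushed apart. The delicate point is that many annuli share edges of $W$ along their lower boundaries, so near $W$ the construction is genuinely non-embedded as a set of abstract pieces and only separates as one moves into the $x_3$-direction; making the triangulations compatible along these shared $1$-skeleta while the $2$-cells fan out into distinct slots is where care is needed. This is presumably also the reason the paper modifies the presentation first (as hinted at the start of Section \ref{section:proofs}) — likely to control how many relators use a given generator, or to make the $g_i'$ occurrences in distinct relators attach along \emph{distinct} subdivided edges of $W$, so that the fanning-out works without collisions.
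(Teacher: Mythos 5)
There is a genuine gap, and it sits exactly where the theorem's difficulty lies. Your construction rests on the claim that ``generic linear placement makes the relative interiors pairwise disjoint.'' This is false for $2$-complexes in $\Rspace^4$: two $2$-simplices in general position in $\Rspace^4$ meet in isolated points ($2+2-4=0$), not in the empty set, and in the essentially $3$-dimensional slab you allot to each annulus they meet generically in segments. General position is precisely what does \emph{not} suffice here --- otherwise every $2$-complex would embed in $\Rspace^4$, which is not the case --- so an explicit geometric argument is required. Moreover, your use of the fourth coordinate (one ``slot'' value per annulus) only separates \emph{distinct} annuli from one another; it does nothing for the self-intersections of a \emph{single} annulus $A_j$, which is the hard case: the lower boundary of $A_j$ traverses the word $r_j$ and may run over the same edge of $W$ several times, and the annulus must also close up into a cycle. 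Confining $A_j$ to a region that is three-dimensional away from $W$ (and note that a constant $x_4$-slot is anyway incompatible with simplex-wise linearity, since the lower boundary must sit at $x_4=0$, so the boundary rectangles interpolate and all annuli converge near $W$) makes this impossible in general; already the relator $g^2$ is problematic in a $3$-dimensional neighborhood of the circle $g$.

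The paper's proof supplies the missing mechanism: it views $\Rspace^4$ as an open book of $3$-planes $\Sigma_\gamma$ through the plane $\Pi$ containing $W$, assigns each annulus a small interval $I_j$ of angles, and \emph{rotates} the annulus around $\Pi$, placing the $l$-th rectangle between angles $\theta_l$ and $\theta_{l+1}$. Consecutive rectangles then occupy consecutive pages and can only meet along their shared edges, independently of how often the lower boundary covers an edge of $W$. The one remaining danger is the last rectangle, which must span the whole interval to close the annulus; this is why the presentation is first replaced by $\langle g_i, h_j;\ r_jh_j,\ h_j\rangle$, so that every relator contains a letter $h_j$ occurring exactly once, attached along an edge $e_j$ at distance $\geq\delta$ from the rest of $W$. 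A quantitative lemma (segments nearly normal to $\Pi$ with feet $\delta$ apart cannot cross if the angular width $\sigma$ is below $\delta/4\rho'$) then shows the closing rectangle misses the rest of the annulus. Your guess about the purpose of the presentation modification is in the right spirit but not quite on target: it is not about distributing occurrences among relators, but about guaranteeing, within each single relator, a once-covered, well-separated edge over which the annulus can be closed.
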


We mention an interesting corollary of this theorem before proceeding to the proof. By taking regular neighborhoods of complexes in a triangulation of $\Rspace^4$ one shows

\begin{corollary}\label{cor:manifold}
For any given presentation $P$, there exists a $4$-manifold-with-boundary $M \subset \Rspace^4$ with $\pi_1(M) \cong G(P)$.
\end{corollary}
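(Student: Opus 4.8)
The plan is to derive this directly from Theorem \ref{theorem:main} by passing to a regular neighborhood. By Theorem \ref{theorem:main} we obtain a simplicial complex $K = K(P)$ that is simplex-wise linearly embedded in $\Rspace^4$, with $\pi_1(|K|) \cong G(P)$. First I would fix a simplicial triangulation $T$ of $\Rspace^4$ in which $K$ appears as a subcomplex; since $K$ is linearly embedded, one can take $T$ to be a common refinement of $K$ and a background triangulation of $\Rspace^4$ (for instance, iteratively subdividing so that every simplex of $K$ is a subcomplex of $T$), which is a standard PL move. Then I would let $M = N(K, T')$ be the closed simplicial regular neighborhood of $K$ in the second barycentric subdivision $T' = T''$, i.e., the union of all closed simplices of $T'$ meeting $K$. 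This $M$ is a compact PL $4$-submanifold-with-boundary of $\Rspace^4$ (regular neighborhoods of subcomplexes in a triangulated PL manifold are manifolds-with-boundary of the ambient dimension), so $M \subset \Rspace^4$ is a $4$-manifold-with-boundary as required.

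The remaining point is the fundamental group. The key fact is that a regular neighborhood of a subcomplex collapses onto that subcomplex, hence deformation retracts onto it: $M \searrow K$, so $|M| \simeq |K|$ and in particular $\pi_1(M) \cong \pi_1(|K|) \cong G(P)$. I would cite the standard regular neighborhood theory (e.g. Rourke--Sanderson, or Hudson) for the collapse; no new computation is needed here. One should also note the basepoint is irrelevant since we only care about the isomorphism type of $\pi_1$, and $|K|$ is connected (it is built from a wedge $W$ with cells attached).

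The main obstacle — really the only thing needing care — is ensuring the regular neighborhood is genuinely a manifold-with-boundary \emph{and} that the construction stays inside $\Rspace^4$; both are handled by working in a triangulation of all of $\Rspace^4$ and taking the neighborhood in a barycentric subdivision, which is exactly why the statement says ``by taking regular neighborhoods of complexes in a triangulation of $\Rspace^4$.'' If one additionally wanted $M$ compact (as I stated) one uses that $K$ is a finite complex, so only finitely many simplices of $T'$ are involved. This completes the proof.
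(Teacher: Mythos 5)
Your argument is correct and is precisely the one the paper intends: it sketches the corollary in a single line (``by taking regular neighborhoods of complexes in a triangulation of $\Rspace^4$''), and your proposal simply fills in the standard details (subcomplex of a triangulation of $\Rspace^4$, second derived neighborhood, collapse/deformation retraction onto $K$). No gaps.
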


Next we make the model complexes more economical in terms of the number of simplices in the complex. We define the \emph{binary size} of a presentation as follows. Let $m_g$ be the largest exponent of the generator $g$ or that of its inverse $g^{-1}$ appearing in the words $r_j$. Moreover, denote by $m$ the largest value of the $m_g$. First define the binary size of a word $r_j = {g'_1}^{k_{j1}} \dots {g'_{m_j}}^{k_{jm_j}}$ to be $b(r_j)= \sum_{l=1}^{m_j} {\lceil \log{m_{g'_l}} \rceil}$, where the $g'_l$ are generators or inverse of a generator, $g'_l \neq g'_{l+1}$, and the logarithm is in base 2. The binary size of the presentation is $b(P) = n + \sum_{j=1}^{m} b(r_j)$. We call a model complex \textit{small}, if its size is in the order of the bit-complexity of the presentation, up to logarithmic factors.

\begin{theorem} \label{theorem:bit}
Any given presentation $P$ can be transformed to an equivalent presentation $P'$ whose unary size is $O(b(P) \log{n})$.
\end{theorem}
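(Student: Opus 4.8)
The plan is the standard \emph{unary-to-binary} compression of powers. A syllable $g^{k}$ inside a relator costs $k$ in the unary size, but $k$ has only $O(\log k)$ binary digits; so if the powers $g,g^{2},g^{4},\dots$ are available as generators, one can rewrite $g^{k}$ as a product of $O(\log k)$ of them, and do this for all occurrences of $g$ at once.

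\textbf{Construction of $P'$.} For each generator $g_{i}$ with $m_{g_{i}}\ge 2$, set $t_{i}=\lfloor\log m_{g_{i}}\rfloor$ and adjoin new symbols $g_{i}^{(0)}=g_{i},g_{i}^{(1)},\dots,g_{i}^{(t_{i})}$ together with the \emph{squaring relations} $g_{i}^{(l)}\bigl(g_{i}^{(l-1)}\bigr)^{-2}$ for $1\le l\le t_{i}$. Each such adjunction is a Tietze transformation (a fresh generator with a defining relation), so the presented group does not change and $g_{i}^{(l)}$ represents $g_{i}^{2^{l}}$; in particular the $g_{i}^{(l)}$ lie in the cyclic subgroup $\langle g_{i}\rangle$ and so commute pairwise. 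Now rewrite every relator: writing $r_{j}=(g'_{1})^{k_{j1}}\cdots(g'_{m_{j}})^{k_{jm_{j}}}$ with $g'_{l}$ a generator or its inverse and $g'_{l}\neq g'_{l+1}$, replace each syllable $g^{k}$ ($k\ge 1$) by $\prod_{l\in S}g^{(l)}$, where $S\subseteq\{0,\dots,\lfloor\log k\rfloor\}$ records the $1$-bits of $k$ (say in increasing order of $l$), and $g^{-k}$ by the formal inverse of this word. Modulo the squaring relations the replacement word equals $g^{k}$, so each substitution is again a Tietze move; let $r_{j}'$ denote the rewritten relator and let $P'$ have generators $\{g_{i}^{(l)}\}$ and relators $\{\text{squaring relations}\}\cup\{r_{j}'\}$. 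Then $G(P')\cong G(P)$, i.e.\ $P'$ is equivalent to $P$, and $P'$ is clearly computable from $P$ in time linear in its own size.

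\textbf{Size estimate.} $P'$ has $n+\sum_{i}t_{i}$ generators, and its squaring relations have total unary size $3\sum_{i}t_{i}$. To bound $\sum_{i}t_{i}$: each $g_{i}$ with $m_{g_{i}}\ge 2$ appears with exponent exactly $m_{g_{i}}$ in some syllable of some $r_{j}$; mapping $g_{i}$ to one such syllable is injective in $i$ (a syllable determines its generator), and that syllable contributes $\lceil\log m_{g_{i}}\rceil\ge t_{i}$ to $b(r_{j})$, hence $\sum_{i}t_{i}\le\sum_{j}b(r_{j})\le b(P)$. Since every local exponent satisfies $k_{jl}\le m_{g'_{l}}$, the $l$-th syllable of $r_{j}$ is replaced by a word of length $\le 1+\lfloor\log m_{g'_{l}}\rfloor$, so $\sum_{j}(\text{unary size of }r_{j}')\le\sum_{j}\sum_{l}\bigl(1+\lceil\log m_{g'_{l}}\rceil\bigr)=(\#\text{syllables of }P)+\sum_{j}b(r_{j})$. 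With the convention implicit in the footnote on generator addressing — under which each syllable already costs its $\Omega(\log n)$-bit address, so that $b(P)$ measures the actual bit-length of $P$ — the number of syllables is $O(b(P)\log n)$, while $\sum_{j}b(r_{j})\le b(P)$. Summing the three contributions gives $s(P')=O(b(P)\log n)$.

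\textbf{Where the work is.} The only nontrivial idea is the repeated-squaring ladder; the rest is bookkeeping. The point to watch is that the ladder is built \emph{once per generator} and shared by all its occurrences, so the overhead it creates, $\sum_{i}\lceil\log m_{g_{i}}\rceil$, stays within $b(P)$; and that the one-letter-per-syllable overhead of the rewriting is exactly what forces the extra $\log n$ factor in the statement (it disappears if one insists that each syllable already costs a unit, equivalently $\Omega(\log n)$ bits, to record). Given the ladder, the equivalence $G(P')\cong G(P)$ is immediate from the Tietze calculus together with the fact that the new generators are powers of the old ones, hence commute and their products realize arbitrary exponents written in binary.
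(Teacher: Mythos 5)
Your proposal is correct and follows essentially the same route as the paper: a repeated-squaring ladder of new generators $g^{(l)}$ representing $g^{2^l}$, adjoined by Tietze moves, with each syllable $g^k$ rewritten via the binary expansion of $k$. Your bookkeeping is in fact slightly more careful than the paper's (bounding $\sum_i t_i$ by $b(P)$ directly, and flagging the unit cost per syllable that both arguments must absorb into the $\log n$ factor), but the underlying idea and the resulting bound are the same.
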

\begin{proof}
 For each generator $g$ we define $\lceil \log{m_g} \rceil$ new generators called $g^{(1)}, \ldots, g^{(\lceil \log{m_g} \rceil)}$ and relations $g^{(k)} = g^{(k-1)}g^{(k-1)}, k=2,..., \lceil \log{m_g} \rceil, g^{(1)} = g$. For each word $r_i$ we make the obvious replacements, and write it with respect to the new generators. For each generator $g$ (or inverse of a generator) we have added $\lceil \log{m_g} \rceil$ new generators, and we have replaced $g'^{k_{jl}}$ in any word by at most $\lceil \log m_g \rceil$ generators (or inverse of one) each of exponent 1. Each old relation $r_i$ now can be written with $b(r_i)$ many generators or inverse of one each with exponent 1, hence has unary size $b(r_i)$. The new presentation has at most $n \log{m}$ generators.
The sum of the unary sizes of the new relations is at most $3 \sum_g \lceil \log{m_g} \rceil + 2n$ which is $O(b(P) + n)$. Hence, the unary size of the new presentation is $O(n \log{m} + b(P))$.
\end{proof}

\paragraph*{Remark}
The above proof has a geometric side. In the small model simplicial complex, towers of Moebius bands are used to efficiently generate powers of a generator of the fundamental group,
see for instance \cite{FrKr14} for more information about these spaces. A tower of Moebius bands is constructed as follows. Take an arbitrary Moebius band $M$. Let $c_0$ be its
core circle, and $c_1$ be its boundary, their free homotopy classes satisfy $[c_1] = 2 [c_0]$. Take another Moebius band $M'$ with $[c'_1] = 2 [c'_0]$. Glue the core $c'_0$ to the boundary
$c_1$. In the resulting space $[c'_1] = 2[c_0]=2[c'_1]=4[c'_0]$. Continuing this process one obtains a tower of Moebius bands whose last boundary is a power of the first core. In the
small model complex, ``pinched'' images of such towers exist for rapidly creating powers of a generator.

The procedure in the above theorem and the procedure for defining the model complex together define an standard way of constructing a small model complex for any presentation $P$. Hence,
\begin{corollary}
There exist small model complexes for presentations. Moreover, they are realized in $\Rspace^4$, and whose realization can be constructed in almost linear time in bit-complexity of $P$.
\end{corollary}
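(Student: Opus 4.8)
The plan is to chain the two preceding theorems and then check that the composition still qualifies as a standard construction and still runs in almost linear time. Given an arbitrary presentation $P$, I would first invoke Theorem~\ref{theorem:bit} to replace $P$ by the equivalent presentation $P'$, observing that the transformation described in its proof is an explicit fixed procedure (introduce the $\lceil \log m_g \rceil$ auxiliary generators $g^{(k)}$ together with the squaring relations, and rewrite each $r_j$ in binary), which can be carried out in time proportional to the size of its output, namely $O(b(P)\log n)$. Then I would apply Theorem~\ref{theorem:main} to $P'$, obtaining the model simplicial complex $K(P')$ realized simplex-wise linearly in $\Rspace^4$, with run-time and number of simplices $O(s(P')) = O(b(P)\log n)$.

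It remains to argue that $K(P')$ deserves to be called a \emph{small} model complex for $P$. Since $P'$ is equivalent to $P$ we have $\pi_1(K(P')) \cong G(P') \cong G(P)$, so by the broad notion of model complex adopted here (a complex produced by a fixed algorithm from a presentation and carrying the presented fundamental group), $K(P')$ is a model complex for $P$: the composite ``transform $P$ to $P'$, then build $K(P')$'' is itself a fixed algorithm. Its size $O(b(P)\log n)$ is of the order of the bit-complexity of $P$ up to a logarithmic factor, which is precisely the meaning of small. Finally, the total running time is the sum of the two stages, each almost linear in $b(P)$, so the realized complex is produced in almost linear time in the bit-complexity of $P$.

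The only point requiring care --- and the closest thing to an obstacle --- is the bookkeeping around the word ``small'' and the suppressed $\log n$ factor: one should confirm that the $n\log m$ contribution to $s(P')$ coming from the auxiliary generators, together with the $\log n$ addressing overhead noted in the footnote, stays within the stated logarithmic slack relative to $b(P)$ and does not blow up for pathological presentations (say, many generators but very short relators). This is immediate from the bound $s(P') = O(n\log m + b(P))$ in Theorem~\ref{theorem:bit} and from $b(P) \geq n$ by definition, but it is the step where the estimate could fail under a slightly different choice of definitions.
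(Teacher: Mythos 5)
Your proposal is correct and follows exactly the paper's (one-sentence) argument: compose the transformation of Theorem~\ref{theorem:bit} with the construction of Theorem~\ref{theorem:main} applied to $P'$, noting that the composite is a fixed procedure producing a complex of size $O(b(P)\log n)$ with the right fundamental group. Your extra bookkeeping about the $\log n$ slack and $b(P)\geq n$ is a sensible elaboration but not a different route.
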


\paragraph*{}
We also state the following generalization of Theorem \ref{theorem:main} whose proof will be given in the next section.
\begin{theorem} \label{theorem:htype}
For each homotopy type of (finite) $2$-dimensional CW complexes there exists a representative realized in $\Rspace^4$.
\end{theorem}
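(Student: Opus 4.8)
The plan is to reduce an arbitrary homotopy type of finite $2$-complexes to that of a model complex $K(P)$ and then to quote Theorem~\ref{theorem:main}. First I would dispose of connectivity: if $X = X_1 \sqcup \dots \sqcup X_p$ with the $X_i$ connected, it suffices to realize each $X_i$ in $\Rspace^4$ by the argument below and translate the realizations into pairwise disjoint balls (say, strung along the first coordinate axis); the union is a simplex-wise linearly realized complex, and it is homotopy equivalent to $X$ since a disjoint union of homotopy equivalences is one. So I may assume $X$ connected and nonempty.

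The heart of the argument is to show that the given homotopy type contains a model complex. Pick a maximal tree $T$ in the $1$-skeleton $X^{(1)}$. Since $(X,T)$ is a CW pair with $T$ contractible, the collapse $X \to X/T$ is a homotopy equivalence, and $X/T$ has a single $0$-cell. Its $1$-cells $g_1,\dots,g_n$ give $(X/T)^{(1)} = \bigvee_{i=1}^{n} S^1$ with $\pi_1 = F_n$ free, and the attaching map of each $2$-cell is freely homotopic to a based loop spelling some word $r_j$ in the $g_i^{\pm 1}$. Since the homotopy type of a space obtained by attaching $2$-cells depends only on the homotopy classes of the attaching maps, $X$ is homotopy equivalent to the complex obtained from $\bigvee_n S^1$ by attaching, for each $j$, a $2$-cell along a loop representing $r_j$. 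The model complex $K(P)$ for $P=\langle g_1,\dots,g_n;\, r_1,\dots,r_m\rangle$ is, as a space, of exactly this form: each capped annulus $A_j \cup D_j$ is a triangulated disk attached to the triangulated wedge of circles along a loop representing $r_j$. Hence $X \simeq K(P)$. Two bookkeeping points: if the resulting presentation would have $m<n$ I pad it with empty relators, which the model construction turns into annuli of zero rectangles and so contribute no cells, restoring the convention of at least $n$ relators; and if some attaching map is null-homotopic (so $X$ has a wedge summand $S^2$) it must be recorded by a nonempty word that is trivial in $F_n$, such as $g_1 g_1^{-1}$, rather than by the empty word—attaching a disk along $g_1 g_1^{-1}$ does yield a wedge $S^2$. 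The extreme case $n=0$, where $X/T$ is a wedge of $2$-spheres, is handled directly or via the presentation $\langle g;\, g,\dots,g\rangle$ with the relator $g$ repeated $m+1$ times, whose model complex is homotopy equivalent to $\bigvee_m S^2$.

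Finally, applying Theorem~\ref{theorem:main} to $P$ produces a simplex-wise linear realization of $K(P)$ in $\Rspace^4$; since $K(P)\simeq X$, this realization represents the given homotopy type. I expect the only real obstacle to be the middle step, namely making sure the model construction reproduces the \emph{homotopy type} of the presentation complex and not merely the fundamental group, together with the disposal of the degenerate cases just listed. None of this is deep, but each point requires a line of care; everything else is either a black-box appeal to Theorem~\ref{theorem:main} or the standard homotopy invariance of cell attachments.
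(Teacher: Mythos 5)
Your proof is correct and follows essentially the same route as the paper: collapse a maximal tree in the $1$-skeleton, read off a presentation $P$ from the (homotopy classes of the) attaching maps, observe that the model complex $K(P)$ has the same $1$-skeleton and homotopic attaching maps and hence the same homotopy type, and invoke Theorem~\ref{theorem:main}. The only difference is cosmetic: the paper first passes to a simplicial representative (so attaching words have at most three letters and are automatically nontrivial), whereas you work directly with the CW structure and therefore spell out the degenerate cases (disconnectedness, null-homotopic attaching maps, $n=0$) that the paper leaves implicit.
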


Any homotopy type of $2$-dimensional CW complexes has a simplicial complex representative, see e.g. \cite{Si93}. It then follows that there exists the corresponding $4$-manifold-with-boundary in $\Rspace^4$ homotopy equivalent to any given $2$-dimensional CW complex.

As the final application, we extend a result about undecidability of relative embedding problems, proved in \cite{FrKr14}, to include the $4$-dimensional case. We are interested to decide for a pair of simplicial complexes $(K,L)$, $L \subset K$, and a fixed PL embedding $f: |L| \rightarrow \Rspace^4$ whether the map $f$ can be extended to an embedding of the whole complex $K$ in $\Rspace^4$ or not. The same problem for $\Rspace^d, d \geq 5$ is undecidable for codimension at most $2$ as proved in \cite{FrKr14}. Here we extend this to the case $\Rspace^4$. Define \emph{refinement complexity} of a PL embedding $f:|K| \rightarrow \Rspace^d$ to be the minimum number of simplices in a subdivision of $K$ on which $f$ is simplex-wise linear.

\begin{theorem} \label{theorem:relembedding}
For each $d \geq 4$ there exists a sequence of $(d-2)$-complex pairs
$(K_i,L_i)$ with a fixed PL embedding $f_i: |L_i| \rightarrow \Rspace^d$ such that the refinement complexity of any extension to an embedding of $K_i$
is a non-recursive function of the number of simplices of $K_i$.
\end{theorem}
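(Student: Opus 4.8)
The plan is to reduce from the word problem of a fixed finitely presented group with unsolvable word problem, following the scheme used for $d \geq 5$ in \cite{FrKr14} and supplying, in the new case $d = 4$, the one geometric ingredient that is not classically available — an explicit, complexity-controlled realization of a $2$-complex in $\Rspace^4$ — by means of Theorem \ref{theorem:main}.

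First I would fix a finite presentation $P$ of a group $G = G(P)$ whose word problem is unsolvable, chosen so that $G$ is the fundamental group of the complement $\Rspace^4 \setminus |K(P)|$ of a linearly realized model complex (for $d > 4$, of the complement of a $(d-2)$-dimensional analogue realized by the classical embeddings of \cite{Sta65}); identifying this complement group, and arranging it to carry the unsolvable word problem, is the place where the controlled embedding of Theorem \ref{theorem:main} does its work. I recall the standard fact that for such $G$ the Dehn function of $P$ — the minimal area $\mathrm{Area}_P(w)$ of a van Kampen diagram over the set of null-homotopic words $w$ — is not bounded above by any recursive function, since a recursive bound would decide the word problem by a finite search over diagrams.

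Next, for each word $w_i$ in the generators of $P$ (enumerating all words) I would let $L_i$ be $|K(P)|$ together with a thin annulus $A_i$ built from $\Theta(|w_i|)$ rectangles attached along the edge-loop spelling $w_i$ and pushed slightly off $|K(P)|$, with the PL embedding $f_i$ that extends the realization of $K(P)$ over $A_i$ inside a regular neighborhood of that edge-loop in the complement; and $K_i := L_i \cup D_i$, where $D_i$ is a single $2$-disk glued along the far boundary circle $c_i$ of $A_i$. Then $(K_i, L_i)$ is a pair of $2$-complexes, $f_i$ is fixed, and $|K_i| = \Theta(s(P) + |w_i|)$ is a recursive — indeed computable-from-$|w_i|$ — function of $i$. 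The key claim is the equivalence: $f_i$ extends to a PL embedding of $K_i$ in $\Rspace^4$ if and only if $w_i$ is trivial in $G$. An extension supplies an embedded disk in $\Rspace^4 \setminus |K(P)|$ bounded by $c_i$, so $[c_i] = 1$ in $\pi_1(\Rspace^4 \setminus |K(P)|) = G$; but by construction of $A_i$ the class $[c_i]$ is that of $w_i$. Conversely, if $w_i$ is trivial in $G$ then $c_i$ is null-homotopic in the complement and hence bounds an embedded PL disk there — general position for a disk in a $4$-manifold, allowing non-locally-flat (cone-point) images, after first, if necessary, adjusting the twisting of $A_i$ so that the framing/self-intersection obstruction vanishes — which yields the extension.

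Granting the equivalence, the refinement complexity $g(i)$ of an extension of $f_i$ is finite precisely when $w_i$ is trivial in $G$, so if $g$ were a recursive function of $|K_i|$ the word problem of $G$ would be decidable (compute $g$, test finiteness); hence $g$ is non-recursive. One can sharpen this: a minimal embedded capping disk projects to a van Kampen diagram for $w_i$ of comparable size, and conversely a diagram of area $A$ yields, by the embedding method underlying Theorem \ref{theorem:main} applied relative to $f_i$ in the complement, a capping disk with $O(A + |w_i|)$ simplices; so $g(i)$ is sandwiched between recursive rescalings of $\mathrm{Area}_P(w_i)$ and therefore inherits the non-recursive growth of the Dehn function. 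I expect the main obstacle to be the equivalence: pinning down $\pi_1(\Rspace^4 \setminus |K(P)|)$ for the realization of Theorem \ref{theorem:main} and ensuring it can be taken with unsolvable word problem, and verifying that a null-homotopic boundary circle in this $4$-dimensional complement genuinely bounds an embedded PL disk there — ruling out, or absorbing into the annulus $A_i$, any secondary framing obstruction. This is precisely the step that has no analogue in $\Rspace^3$ and that \cite{FrKr14} settles for $d \geq 5$ by a dimension count; here it is replaced by the explicit low-complexity embedding of Theorem \ref{theorem:main}.
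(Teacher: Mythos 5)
There is a genuine gap, and it sits exactly where you locate the load-bearing step: your reduction requires $\pi_1(\Rspace^4 \setminus |K(P)|) \cong G(P)$, but Theorem \ref{theorem:main} controls the fundamental group of the \emph{complex}, not of its \emph{complement}, and the two are unrelated. The complement of a $2$-complex in $\Rspace^4$ has fundamental group normally generated by meridians of the $2$-simplices; already for the $1$-skeleton (codimension $3$) the complement is simply connected, so the generator loops of $P$ contribute nothing to the complement group, and the class $[c_i]$ of your pushed-off circle has no reason to record whether $w_i$ is trivial in $G(P)$. No choice of $P$ fixes this, because the construction of Theorem \ref{theorem:main} makes no promise about the complement. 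The paper avoids this by turning the pair inside out: it takes a regular neighborhood $N_i$ of the embedded complex $K_i$ (so $\pi_1(N_i)\cong\pi_1(K_i)$, which \emph{is} the presented group with $h_i=C_i$), lets the fixed subcomplex be the closure of the complement of $N_i$ together with the $1$-skeleton of $N_i$, and glues the test disk along $h_i$; an embedded extension is then forced \emph{into} $N_i$, so extendability is equivalent to $h_i$ bounding a disk in $N_i$, i.e.\ to triviality of $C_i$ in $G(P)$ — the word problem. Your construction asks the disk to live in the complement of the complex, which is governed by the wrong group.

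A second, smaller gap: even granting a correct $\pi_1$ identification, passing from ``null-homotopic'' to ``bounds an embedded PL disk'' in a $4$-manifold is not supplied by general position plus cone points (double points of an immersed disk are not cone singularities and cannot be coned away). The paper handles this inside $N_i$ by exhibiting transverse spheres coming from the free edges of $h_i$ (each edge of $h_i$ meets only one triangle and has a point far from the rest of the complex) and then tubing off self-intersections à la Freedman--Quinn; in the open complement of $|K(P)|$ you have no such transverse spheres, so your ``if'' direction is unsupported as stated. Finally, note the paper's pairs for general $d$ are obtained by passing to the $(d-2)$-skeleton of the complement-side complex and checking that deleting top- and codimension-one simplices only changes the relevant fundamental group by a free product with a free group; your sketch does not address how to produce codimension-$2$ pairs for $d>4$.
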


The proof of the above theorem is also given in the next section.

\section{Proofs}\label{section:proofs}

We first give the proof of Theorem \ref{theorem:main}. As a first step, we change the presentation $P$ into a new presentation
$$P' = \langle g_1, \ldots, g_n, h_1, \ldots, h_m ; r_1h_1, \ldots, r_mh_m, h_1, \ldots, h_m \rangle.$$

It is easily seen that $G(P') \cong G(P)$. The property of $P'$ which we need is that, for each relation there is a generator appearing with exponent $1$, namely $h_i$. Thus in the following we work with the presentation $P$ which is of this form.

We first take as the wedge of circles $W$ a graph as in Figure \ref{fig:wedge}, and realize $W$ in a $2$-plane $\Pi \subset \Rspace^4$ as is shown in the figure. From now on, by $|W|$ we mean the image of this embedding. In general, we denote the image of an abstract space $X$ in $\Rspace^4$ by $|X|$ to distinguish it from the abstract space. Note that in the embedding of the graph as drawn in Figure \ref{fig:wedge}, any circle $h_j$ contains an edge $e_j$ such that any point in $|e_j|$ has distance at least $\delta > 0$ to $|W| - \bigcup_j |h_j|$. This $\delta$ is a constant of the algorithm and is independent of the given presentation.

\begin{figure}[h]
  \centering
  \includegraphics[scale=0.8]{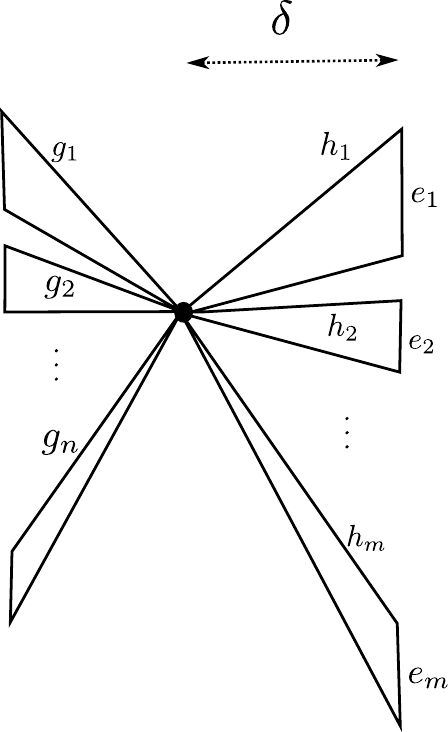}\\
  \caption{the wedge of the generator circles}\label{fig:wedge}
\end{figure}

For each $j=1,\ldots, m$, we have to find a disk $D_j \subset \Rspace^4$, such that its boundary is glued to $|W|$ by a map representing $r_jh_j$ or $h_j \in \pi_1(W)$, and its relative interior is disjoint from the rest of the complex. For the relations of the type $h_j$ we use the region of $\Pi$ that the circle $h_j$ bounds as its disk. Hence, we focus on the relations of the type $r_j h_j$ in the sequel.

We start with a procedure for mapping the annuli $A_j$ defined in Section \ref{section:theorems} into $\Rspace^4$. Observe that $\Rspace^4$ is the union of $3$-planes ``passing through" $\Pi$, where $\Pi$ is their only intersection. These $3$-planes could be parameterized by an angle $\gamma \in [0,\pi)$. We denote by $\Sigma_\gamma$ the 3-plane with parameter $\gamma \in [0, \pi)$. Let $\Lambda \subset \Rspace^4$ be a $3$-plane parallel to $\Pi$, that is $\Lambda \cap \Pi = \emptyset$. Let $\Sigma_0$ be the $3$-plane passing through $\Pi$ and parallel to $\Lambda$. The value $\gamma$ is then the angle between the normal of $\Pi$ that point towards $\Lambda$ in $\Sigma_\gamma$ and a fixed normal to $\Pi$ in $\Sigma_0$.

\paragraph{Remark}
A smooth version of the mapping of the annulus can be thought of as follows. One cuts the annulus into a strip and them glues the lower
boundary of the strip along the required loop on $\Pi$ and rotates the strip such that in each $\Sigma_\gamma$ the strip is a segment perpendicular to $\Pi$. The following
is a PL approximation of this map.

We say an annulus $A$ is an \emph{$n$-annulus} if it is comprised of $n$ rectangles. We denote vertices in one boundary of $A$ by $v_i, i=1, \ldots, n$, and those in the other boundary by $w_i, i=1, \ldots, n$. The boundary circle with the $v_i$ is called the \emph{upper boundary} and the other boundary circle is called the \emph{lower boundary} of $A$, see Figure \ref{fig:annulus}. The annulus is triangulated as shown in the figure. The lower boundary is the circle whose image will be glued to $|W|$.

\begin{figure}[h]
  \centering
  \includegraphics[scale=0.5]{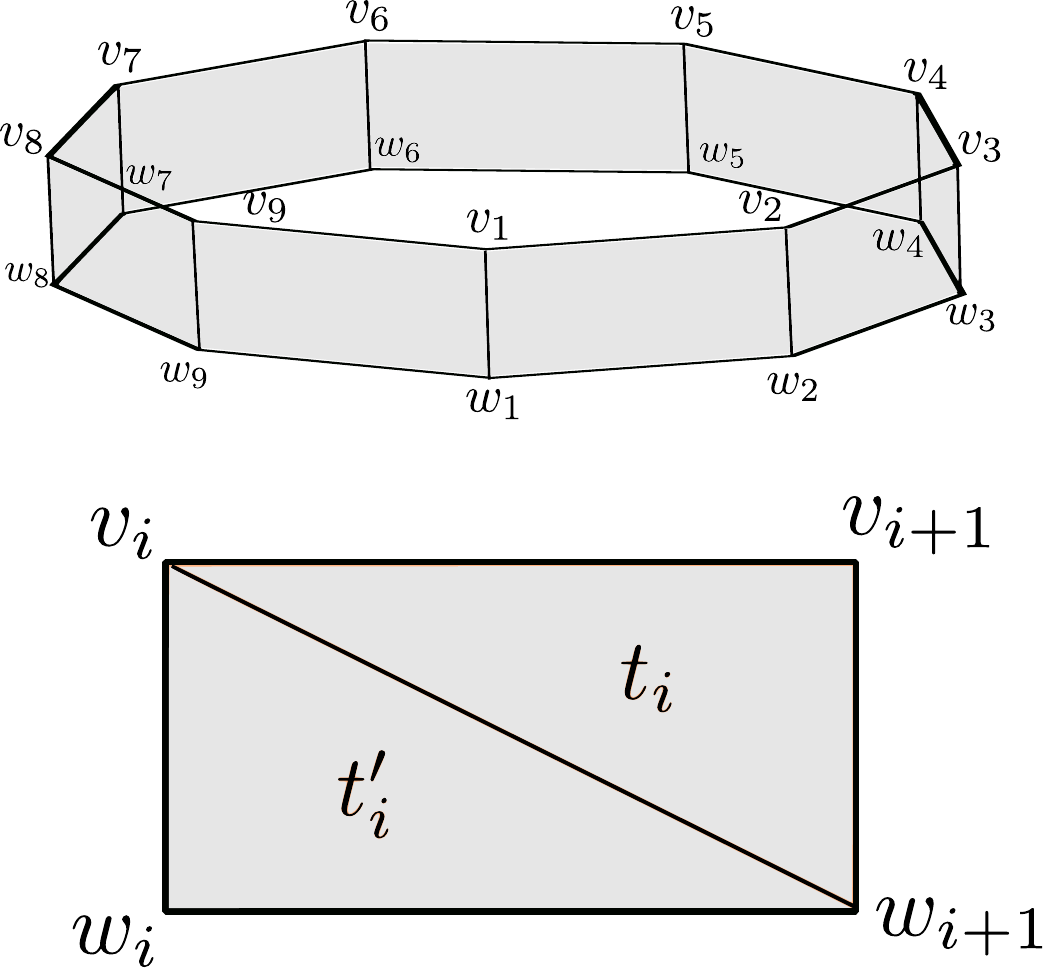}\\
  \caption{notation and triangulation of a $9$-annulus, indexes modulo $n$}\label{fig:annulus}
\end{figure}

For $j=1,\ldots,m$, we take an interval $I_j \subset (\theta^*, \pi/2)$ such that for distinct $j$ the intervals are disjoint. Here $\theta^* < \pi/2$ is an absolute constant.  The lengths of these intervals satisfy $|I_j| < \sigma$. The constant $\sigma$ will be determined later.

Let $I=I_j$. We map a $ck_j$-annulus (see Section 1 for the definition of $k_j$) $A = A_j$ in $\Sigma_{I}=\bigcup_{\gamma \in I} \Sigma_\gamma$, such that the image of its lower boundary is attached to $|W| \subset \Pi$ by the required map representing $r_jh_j \in \pi_1(W)$, and, is embedded elsewhere. Set $n = ck_j$. We show that independent of $n$, such an annulus can always be realized. Let $Q_1, \ldots, Q_n$ be the rectangles of $A$. Each rectangle is divided into two triangles, denoted $t_l, {t_l'}, l=1, \ldots, n$. We consider the obvious attaching map that maps a lower edge of $A$ into one edge of $W$, and which represents $r_jh_j \in \pi_1(W)$. We then identify the vertices (and edges) of the lower boundary with those of $W$. The basepoints are the wedge point in $W$, and $w_2$ in the lower boundary. Note that each lower vertex $w_l$ determines a vertex of $W$. Moreover, there exists exactly one rectangle such that the lower edge of it is identified with $e_j$ by the attaching map and that is $Q_n$. Let $L$ be the smallest subcomplex containing $A- Q_n$. The complex 
$L$ is just a \emph{strip} of $n-1$ rectangles.

Assume $ \theta_1, \theta_2, \ldots, \theta_n \in I$ are $n$ values such that $\theta_1 < \theta_2 < \dotsb < \theta_n$. For $l = 1, \ldots, n$, let $v_l$ be the point where the line perpendicular to $\Pi$ and passing through $w_l \in |W|$ in $\Sigma_{\theta_l}$ intersects $\Lambda$. We have now placed the vertices of $L$ in $\Rspace^4$. If we just realize $L$ linearly using these vertices, we have attached the strip $L$ such that it is disjoint from the rest of the complex other than for its lower boundary which is in $\Pi$. Note that this argument is correct regardless of how the lower edges of the rectangles are identified with segments in $\Pi$. Moreover, observe that the upper boundary of the strip lies entirely in the $3$-plane $\Lambda$.

In the next step, we have to embed the rectangle $Q_n$ to finish constructing the annulus $|A|$. The vertices of the rectangle $Q_n$ already exist, and we can just linearly realize $Q_n$ as two triangles. However, there is no guarantee that the two triangles are disjoint from the rest of the annulus. Nonetheless, we claim that for $\sigma$ sufficiently small the
annulus will be embedded. The intuition behind this is that the intersection of the annulus with $\Sigma_\gamma$, $\gamma \in (\theta_1, \theta_n)$, $\gamma \neq \theta_i$, ought to consist of two almost perpendicular (on $\Pi$) segments whose base points have a positive distance from each other.  Let $|A_j|$ be the annulus in $\Rspace^4$ (possibly with self-intersections) constructed by the above procedure for the $j$-th relation.

\begin{lemma}
The constant $\sigma$ can be chosen such that, for all $j$, the relative interior of the annulus $|A_j|$ is embedded in $\Sigma_{I_j}$, the lower boundary is attached to $|W|$ by a map representing $r_jh_j$, and the upper boundary is a circle in $\Lambda \cap \Sigma_{I_j}$.
\end{lemma}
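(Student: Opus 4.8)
The plan is to make rigorous the intuitive picture described before the lemma: the annulus $|A_j|$, after linearly realizing the strip $L$ and then the closing rectangle $Q_n$, is a union of segments, one in each $3$-plane $\Sigma_\gamma$, each segment perpendicular to $\Pi$ and based at a point of $|W|$, and we want no two of these segments to coincide or cross. First I would set up coordinates on $\Sigma_{I_j} = \bigcup_{\gamma \in I_j} \Sigma_\gamma$: a point is determined by its angle $\gamma$, its height $t \ge 0$ measured along the normal direction from $\Pi$ toward $\Lambda$, and the foot $p \in \Pi$ of that normal. In this parametrization the images $|v_l|, |w_l|$ of the strip vertices all have the form (foot on $|W|$, height $0$ or the fixed height of $\Lambda$, angle $\theta_l$), so each linearly realized triangle of $L$, when intersected with a plane $\Sigma_\gamma$ for $\gamma$ strictly between two consecutive $\theta_l$, is a single segment from $\Pi$ up to $\Lambda$ whose foot moves linearly. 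Since within $L$ the triangles incident to a fixed $\gamma$-slab come from at most one rectangle $Q_l$, and that rectangle's two lower feet $w_l, w_{l+1}$ are distinct vertices of $|W|$, the slab-segments of $L$ are pairwise disjoint for $\sigma$ small — this part is essentially automatic and does not even need $\sigma$ small, only that the $\theta_l$ are distinct, as already observed in the text.

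Next I would handle the closing rectangle $Q_n$, which is the only source of possible self-intersection. The rectangle $Q_n$ has upper vertices $v_n, v_1$ (in $\Lambda$, at angles $\theta_n, \theta_1$) and lower vertices $w_n, w_1$ (in $\Pi$, on $|W|$). Its two triangles, realized linearly, sweep out angles ranging over roughly all of $[\theta_1,\theta_n] \subset I_j$, so for each $\gamma$ in this range $Q_n \cap \Sigma_\gamma$ is again a segment, but now based at a point $q(\gamma) \in \Pi$ obtained by linearly interpolating between the feet $w_n$ and $w_1$ (with a bend at the diagonal of the rectangle). The key geometric claim is: for $\gamma$ in the interior of $I_j$ and $\gamma$ not one of the finitely many $\theta_l$, the foot $q(\gamma)$ produced by $Q_n$ is distinct from the foot $r(\gamma)$ produced by the relevant triangle of $L$ in that slab, and moreover the two segments (both perpendicular to $\Pi$, of the same height range) do not cross. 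Because both segments are straight and perpendicular to $\Pi$, non-crossing is equivalent to the feet being distinct, so everything reduces to showing $q(\gamma) \ne r(\gamma)$ in $\Pi$. Here is where $\sigma = |I_j|$ small enters: as $\sigma \to 0$ all the angles $\theta_1,\dots,\theta_n$ collapse, the strip $L$ degenerates onto the polygonal path in $\Pi$ that realizes $r_j h_j$, and $Q_n$ degenerates onto the segment realizing $e_j \subset h_j$. In the limit, $q(\gamma)$ lies on $|e_j|$ while $r(\gamma)$ lies on $|W| - \bigcup_j |h_j|$ for $\gamma$ in the subinterval corresponding to the part of $L$ that is glued away from $h_j$, and these are at distance $\ge \delta$ by the property of the embedding of $W$ in Figure \ref{fig:wedge}; on the remaining subinterval, where $r(\gamma)$ also lies near $h_j$, distinctness of consecutive vertices $w_l$ keeps the feet apart by a presentation-independent amount. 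By uniform continuity, for $\sigma$ below an absolute threshold (independent of $j$ and of $P$, since $\delta$ and $\theta^*$ are absolute constants and the combinatorial structure of each rectangle is the same) the distance $|q(\gamma)-r(\gamma)|$ stays bounded below by, say, $\delta/2$ throughout.

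The remaining routine verifications are: that the lower boundary is attached by a map representing $r_j h_j$ (immediate from the choice of attaching map and the identification of lower vertices with vertices of $|W|$); that the upper boundary lies in $\Lambda \cap \Sigma_{I_j}$ (immediate, since every $v_l$ was defined as a point of $\Lambda$ in some $\Sigma_{\theta_l}$ with $\theta_l \in I_j$, and $Q_n$'s upper edge interpolates between $v_n$ and $v_1$ inside $\Lambda$); and that the relative interior is embedded, which combines the slab-by-slab disjointness above with the observation that on each slab the intersection is one or two segments meeting only along $\Pi$ or $\Lambda$, i.e. only at boundary points. Finally, since the intervals $I_j$ were chosen pairwise disjoint, the annuli $|A_j|$ for distinct $j$ live in disjoint regions $\Sigma_{I_j}$ and meet only in $|W| \subset \Pi$, so a single choice of $\sigma$ (the absolute threshold above, and small enough that $m$ disjoint intervals of length $<\sigma$ fit in $(\theta^*,\pi/2)$) works for all $j$ simultaneously. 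I expect the main obstacle to be the compactness/uniform-continuity argument in the previous paragraph: one must check carefully that the lower bound on $|q(\gamma)-r(\gamma)|$ does not degrade as the presentation grows — i.e. that $\sigma$ can be taken absolute — which is why the degeneration-as-$\sigma\to0$ picture, together with the fixed constant $\delta$ from the embedding of $W$ and the identical local combinatorics of each closing rectangle $Q_n$, is the crux of the proof.
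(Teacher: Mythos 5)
Your slab-by-slab decomposition matches the paper's (the cross-section $A_\gamma = Cl(\Sigma_\gamma \cap |A| - \Pi)$ consists of at most two segments, one from the strip $L$ and one from the long triangle of $Q_n$), and your observations that $L$ alone is automatically embedded and that the upper boundary lies in $\Lambda \cap \Sigma_{I_j}$ agree with the paper. But the step ``both segments are perpendicular to $\Pi$, so non-crossing is equivalent to the feet being distinct'' is false, and it is exactly the point the paper has to work for. The cross-section $e_\gamma$ of the triangle $t_i = v_iv_{i+1}w_{i+1}$ in $\Sigma_\gamma$, $\theta_i < \gamma < \theta_{i+1}$, runs from $w_{i+1} \in \Pi$ to a point of the edge $v_iv_{i+1}$ whose orthogonal projection to $\Pi$ lies on $w_iw_{i+1}$ and is in general not $w_{i+1}$; likewise the cross-section $f_\gamma$ of the long triangle $t_n = v_nv_1w_1$ is based at $w_1$ but its top projects into $e_j$. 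These segments are only approximately normal to $\Pi$, and two tilted segments in the $3$-space $\Sigma_\gamma$ with distinct (even $\delta$-separated) feet can perfectly well cross above $\Pi$. So distinctness of the feet does not finish the argument, and your uniform-continuity paragraph, which only controls the feet at height zero, does not close this gap.

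What the paper does instead --- and what is missing from your proposal --- is quantitative control of the tilt: it bounds the angle of $e_\gamma$ with the normal of $\Pi$ in $\Sigma_\gamma$ by $2(\theta_{i+1}-\theta_i)$ and that of $f_\gamma$ by $2(\theta_n-\theta_1) \le 2|I_j|$, and then proves the separate elementary Lemma \ref{lemma:segments}: if two segments with feet $\delta$ apart in $\Pi$ meet above $\Pi$, then at least one of them makes an angle greater than $\delta/(2\ell)$ with the normal. Combining the angle bound with this lemma yields the explicit threshold $\sigma < \delta/(4\rho')$, which is where the presentation-independent constant you are looking for actually comes from; without some version of this angle-versus-foot-separation tradeoff the reduction to planar data in $\Pi$ is not valid. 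A smaller inaccuracy: the foot of $f_\gamma$ is always the single vertex $w_1$ (an endpoint of $e_j$), and the relevant separation is between $w_1$ and $w_{i+1}$, guaranteed because $e_j$ is covered exactly once by the attaching map; it is not a separation between $|e_j|$ and $|W| - \bigcup_j |h_j|$ varying with $\gamma$ as you describe.
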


\begin{proof}

We use the same notation as above. Let $|A| = |A_j|$ and consider a $3$-plane $\Sigma_\gamma, \gamma \in [\theta_1, \theta_n]$, and the set $A_\gamma = Cl(\Sigma_\gamma \cap |A| - \Pi)$. First, observe that $A_{\theta_1}$ and $A_{\theta_n}$ consist of a triangle each, so there is no intersection in these two $3$-planes. Notice that the triangle $|t_n|$ spans all of the $\Sigma_{(\theta_1, \theta_n)}$ and contributes a line segment to $A_\gamma$ with $w_1$ as one endpoint. If $\gamma  = \theta_i$ for some $1<i<n$, then, $A_\gamma$ consists of a triangle perpendicular to $\Pi$ in $\Sigma_\gamma$, namely the triangle $|t'_i|=v_iw_iw_{i+1}$, and, a line segment from the triangle $|t_n|$, whose endpoints are $w_1$ and a point in the segment $v_1v_n \subset \Lambda$, let this line segment be $f_\gamma$. For other values of $\gamma$, $A_\gamma$ consists of two line segments, both having an endpoint in $\Lambda$, see Figure \ref{fig:attaching}. The other endpoints are $w_{i+1} \neq w_1$ and $w_1$, if $\theta_{i} \leq \gamma \leq \theta_{i+1}, i\in \{ 1,\ldots,n-1 \}$. Let $f_\gamma$ denote the segment with endpoint $w_1$ and $e_\gamma$ the other segment. The point $w_1$ is never an endpoint of a $e_\gamma$ since the edge $|e_j|$ is covered once by the attaching map. The figure shows the case $i=n-1$.

\begin{figure}
  \centering
  \includegraphics[scale=0.7]{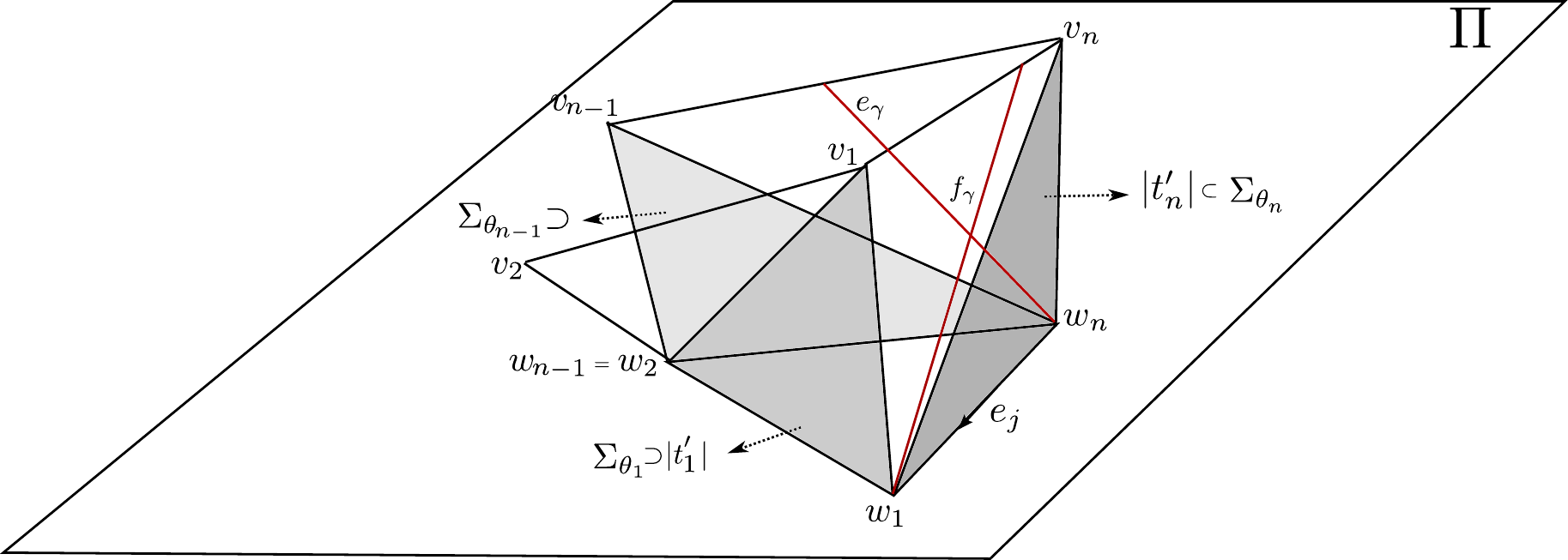}\\
  \caption{part of an embedded annulus}\label{fig:attaching}
\end{figure}

Consider the second case, i.e., when $A_\gamma$ consists of two line segments. We are interested in the angle that $e_\gamma$ and $f_\gamma$ make with the normal of $\Pi$ in $\Sigma_\gamma$ pointing towards $\Lambda$. We can say that the angle between two directions in $\Rspace^4$ is the length of the geodesic that connects the corresponding points in $S^3 \subset \Rspace^4$.

Let $\alpha$ be this angle for $e_\gamma$, and $\beta$ the corresponding one for $f_\gamma$. The angle $\alpha$ is bounded from above by the angle $2\zeta_i = 2(\theta_{i+1} - \theta_i)$. To see this note that $e_{\gamma}$ lies in the triangle $|t_i|$ and is incident on $w_{i+1}$ and thus its angle with $w_{i+1}v_{i+1}$ is bounded by $\zeta_i$. Let $n_{i+1}$ be the normal vector in $\Sigma_{\theta_{i+1}}$ pointing towards $\Lambda$. Then the angle between $n_{i+1}$ and $e_{\gamma}$ is bounded from above by $\zeta_i$. Moreover, if $n_\gamma$ denotes the normal in $\Sigma_{\gamma}$ pointing towards $\Lambda$, then the angle between $n_\gamma$ and $n_{i+1}$ is also bounded by $\zeta_i$. It follows from the triangle inequality on $S^3$ that the angle between $n_{\gamma}$ and $e_\gamma$ is bounded by $2\zeta_i$.

We have thus $\alpha \leq 2\zeta_i < 2|I_j|$. Also, $f_\gamma$ lies in the triangle $|t_n|$ and similar to the above it follows that

$$\beta \leq  2(\theta_n - \theta_1) \leq 2|I_j|. $$

On the other hand, note that one endpoint of $f_\gamma$ is always $w_1$
and the nearest that the endpoint of $e_\gamma$ can come to it is $\delta$ or the length of $e_j$. We can assume that the length of $e_j$ is a constant of the algorithm and is greater than $\delta$.
It follows from Lemma \ref{lemma:segments} below that if $$ \alpha, \beta < \frac{\delta} {2 \rho}$$ an intersection between $e_\gamma$ and $f_\gamma$ is impossible, where $\rho$ is the maximum length of a segment of type $e_\gamma$ or $f_\gamma$. Thus by choosing $\sigma$ such that
$$ 2|I_j| < \frac{\delta}{2 \rho}$$
there will be no intersection.

It is easy to see that the length of the segments $e_\gamma$ and $f_\gamma$ are decreasing functions of $\gamma$ for $\gamma \in (\theta_1, \theta_n)$. Hence, the maximum length is achieved by the length of the ``first" segment of the strip, i.e., $w_1v_1$, and this length is always less than the distance between parallel (in $\Sigma_{\theta^*}$) $2$-planes $\Pi$ and $\Lambda \cap \Sigma_{\theta^*}$. Let $\rho'$ denote this distance. Then $\rho < \rho'$, hence it is enough to choose $\sigma$ such that $$ \sigma  < \frac{\delta} {4\rho'}.$$

A similar argument works for the remaining case.
\end{proof}

\begin{lemma}\label{lemma:segments}
Let $\Pi$ be a $2$-plane in a $3$-space and let $e_1,e_2$ be two segments each having exactly one endpoint in $\Pi$, and so that they intersect at the point $p$ outside of $\Pi$. Moreover, let $\delta$ be the distance between their endpoints in $\Pi$. Then at least one of $e_1,e_2$ makes an angle with the normal vector of $\Pi$ (pointing towards the half-space containing the $e_i$) which is greater than $\frac{\delta}  {2 \ell(e_i)}$, where $\ell(\cdot)$ denotes length.
\end{lemma}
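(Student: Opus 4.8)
The plan is to project the entire configuration orthogonally onto $\Pi$ and thereby reduce the claim to the elementary inequality $\sin\theta < \theta$ valid for $\theta \in (0,\pi/2)$. Write $a_i$ for the endpoint of $e_i$ that lies on $\Pi$, and $\theta_i$ for the angle between $e_i$ and the chosen unit normal $n$. A segment with one endpoint on $\Pi$ is contained in one of the two closed half-spaces bounded by $\Pi$; since the common point $p$ lies off $\Pi$, both $e_1$ and $e_2$ lie in the half-space into which $n$ points, and hence $\theta_i \in [0,\pi/2)$, the value $\pi/2$ being excluded because $e_i \not\subset \Pi$.

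First I would record the projection estimate. Let $\pi_\Pi$ denote orthogonal projection onto $\Pi$ and set $p' = \pi_\Pi(p)$. Because $p \notin \Pi$ we have $p \neq a_i$, so $p$ lies on the sub-segment of $e_i$ from $a_i$ to its other endpoint, and that sub-segment has length at most $\ell(e_i)$. Its image under $\pi_\Pi$ is precisely the segment from $a_i$ to $p'$, and projection scales the length of this sub-segment by the factor $\sin\theta_i$; therefore $\|a_i - p'\| \le \ell(e_i)\sin\theta_i$ for $i = 1,2$.

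Next, the triangle inequality inside $\Pi$ gives $\delta = \|a_1 - a_2\| \le \|a_1 - p'\| + \|p' - a_2\|$, so at least one of the two summands is at least $\delta/2$; say this holds for the index $i$. Combining with the previous step yields $\delta/2 \le \ell(e_i)\sin\theta_i$. Since $\delta > 0$, this excludes $\theta_i = 0$, so $\sin\theta_i < \theta_i$, and hence $\theta_i > \delta/(2\ell(e_i))$, which is the assertion.

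There is no genuine obstacle in this argument; the points that deserve a moment's care are the three ``boundary'' issues: that $p$ really lies between the endpoints of each $e_i$ (so that the relevant sub-segment has length at most $\ell(e_i)$), that both segments lie on the same side of $\Pi$ so that the angle to $n$ is acute and $\sin\theta_i$ is the correct projection factor, and the upgrade from the non-strict bound $\sin\theta_i \ge \delta/(2\ell(e_i))$ to the strict inequality in the statement, which uses only $\delta > 0$.
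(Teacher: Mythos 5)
Your proof is correct, and it shares the overall skeleton of the paper's argument: both proofs split $\delta$ by a triangle inequality into two contributions, one per segment, and bound each contribution by $\ell(e_i)$ times the corresponding angle. The difference is in how that last bound is realized. You project the configuration orthogonally onto $\Pi$, so that the contribution of $e_i$ is $\|a_i-p'\|=\|a_i-p\|\sin\theta_i\le \ell(e_i)\sin\theta_i$, and then invoke $\sin\theta<\theta$; the triangle inequality you use lives inside $\Pi$ (from $a_1$ to $p'$ to $a_2$). The paper instead erects perpendiculars of lengths $\ell(e_i)$ at the feet $x_i$, with tops $y_\alpha,y_\beta$, compares the arc of angle $\alpha$ (of length $\ell(e_1)\alpha$) to the chord $py_\alpha$, and applies the triangle inequality in the ambient $3$-space together with $\|y_\alpha y_\beta\|\ge\delta$. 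The two intermediate quantities ($\ell(e_i)\sin\theta_i$ versus the chord $2\ell(e_i)\sin(\theta_i/2)$) differ, but play the same role. Your version is arguably a bit cleaner, since it needs only orthogonal projection and the elementary inequality $\sin\theta<\theta$, and you are also more careful than the paper on the boundary issue: the strictness of the final inequality (equivalently, the nondegeneracy of the arcs in the paper's chain) does require $\delta>0$, which you make explicit and which holds in the application (the endpoints are $w_1$ and a point at distance at least $\delta>0$ from it).
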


\begin{proof}
Let $\alpha$ be the angle that $e_1$ makes with the normal and $\beta$ the same for $e_2$. Let $x_1,x_2$ be the endpoints of the segments in $\Pi$ respectively. We can assume that the intersection point is the other endpoint of the two segments. Let $q_\alpha$ be a segment of length equal
to $\ell(e_1)$ and perpendicular to $\Pi$ at $x_1$. Similarly, let $q_\beta$ be the segment of the length $\ell(e_2)$ and perpendicular
to $\Pi$ at $x_2$. Let $y_\alpha$, $y_\beta$ be the other endpoints of $q_\alpha$, $q_\beta$ respectively. Consider the circle with center $x_1$ passing through $p$ and $y_\alpha$ and let $a_\alpha$ be the short arc between $y_\alpha$ and $p$ on this circle. Define $a_\beta$ symmetrically.
Then we have
$$ \ell(e_1) \alpha + \ell(e_2)\beta = \ell(a_\alpha)+\ell(a_\beta) > \ell(p y_\alpha)+\ell(p y_\beta) \geq \ell(y_\alpha y_\beta) \geq \delta.$$
It follows that $\max \{\ell(e_1) \alpha, \ell(e_2) \beta \} > \delta/2$, which implies $ \alpha > \delta/(2 \ell(e_1))$ or $\beta > \delta/(2 \ell(e_2))$.
\end{proof}

To finish the proof of the theorem, it is enough to observe that the circle
in $\Lambda \cap \Sigma_{I_j}$ which is the upper boundary of $A_j$ can be coned from a vertex in the other side of $\Lambda$ and inside $\Sigma_{I_j}$. This cone will be an embedded disk in $\Sigma_{I_j}$, and serves to kill the homotopy class of $r_jh_j$. $\qed$

We remark that the condition $\sigma < \delta/4\rho'$ is independent of the given presentation. Hence the real constraint on $|I_j|$ which makes them arbitrarily small when the number of relations is large, comes from the existence of disjoint intervals $I_j$.

\subsection*{Proof of Theorem \ref{theorem:htype}}
We assume the given CW complex is connected. Since for any $2$-dimensional CW complex there exists a simplicial complex homotopy equivalent to it, it is sufficient to show that for any given simplicial $2$-complex $K'$ there exists a homotopy equivalent complex $K$ realized in $\Rspace^4$. We build $K$ out of $K'$ by contracting a spanning tree of the $1$-skeleton. Then $K$ is a CW complex whose cells are simplices and attaching maps are simplicial. Since the $1$-skeleton of $K$ is a wedge of circles it can be embedded in a $2$-plane $\Pi \subset \Rspace^4$. Let $c_1, \ldots, c_n$ be the $1$-cells of $K$ and $\phi_1, \ldots, \phi_m$ be attaching maps of $2$-cells of $K$. Observe that since the $2$-cells come from triangles of $K$ each attaching map defines an element of the homotopy class of the $1$-skeleton represented by at most three distinct generators (or inverse of a generator). The procedure used in the proof of the main theorem applied to the presentation $P = \langle c_1, \ldots, c_n ; [\phi_1], \
\ldots , [\phi_m] \rangle $ builds a simplicial $2$-complex in $\Rspace^4$ which is homotopy equivalent to $K$. This last claim follows since the complex in $\Rspace^4$ and $K$ have the same $1$-skeleton and homotopic attaching maps for $2$-cells, see for instance Theorem 1.6 in \cite{Si93}.

\subsection*{Proof of Theorem \ref{theorem:relembedding}}
The proof for $d\geq 5$ is given in \cite{FrKr14}. Here we explain and expand that proof and show it applies to the case $d=4$.
We start by constructing the complex $K$ for a presentation $P$ of a group whose word problem is undecidable. This implies that there is a sequence of curves $C_i$ in the $1$-skeleton of $K$ which bound images of disks in $K$, with boundary mapped to $C_i$, but there is no recursive function that bounds the complexity of the image, in terms of number of edges in $C_i$. The complex $K_i$ will be built by adding a new circle $h_i$ to the wedge of circles corresponding to generators of $P$, and a new disk defining the relation $C_i = h_i$, where here $C_i$ means the corresponding word to the curve $C_i$. Thus curves $h_i$ and $C_i$ represent the same class in $\pi_1(K_i)$.

The complex $K_i$ embeds in $\Rspace^4$ by Theorem \ref{theorem:main}. Let $N_i$ be a regular neighborhood of the embedded complex. In case $d\geq 5$ it is easily seen that the $h_i$ bound disks in $N_i$ but in our setting this is not immediate. Anyways, we assume for now that the curves $h_i$ bound disks in $N_i$. Again, the complexity of these disks is not bounded by any recursive function, otherwise we could solve the word problem for the group by an algorithm.

We can assume that $N_i$ is triangulated by a subcomplex of a simplicial complex triangulating a ball in $\Rspace^4$. Let $X_i$ be the complex which is the closure of complement of $N_i$ union with the $1$-skeleton of $N_i$.
Now we build the complex $K'_i$ by gluing a disk along $h_i$ to $X_i$ and we consider the pair $(K'_i,X_i)$ with the given embedding of $X_i$. Then it is easy to see that the embedding problem for these codimension $0$ pairs is undecidable since it is equivalent to the triviality problem for the $h_i$ in $N_i$.

Next we reduce the dimension to $d-2$. This is done by removing $d$-simplices and $d-1$-simplices from the complexes $X_i$ and $K'_i$. We have to show that removing these simplices does not affect the computability of our problem. Indeed, the complementary space of the $(d-2)$-skeleton of $X_i$ in the given embedding has fundamental group $G(P)$ free product with a free group.
To see this consider the complementary space of $X_i$ at the beginning as an open subset of $\Rspace^d$, i.e. interior of $N_i$ (Adding $1$-skeleton of $N_i$ does not change fundamental group of complement of $X$). Now removing $d$-simplices has the effect of adding disjoint open $d$-simplices to this complement. Hence the fundamental group does not change. Removing $(d-1)$-simplices from $X_i$ means adding relatively open $(d-1)$-simplices to the complement. Each such simplex is incident on at most two $d$-simplices and changes the fundamental group if and only if the two had been connected before in the space. In other words, the nerve of the collection of $d$-simplices and $(d-1)$-simplices is a graph hence has free fundamental group. Since free product with a finitely generated free group does not change computability it follows that we can replace $(K'_i,X_i)$ with its $(d-2)$-skeleton and the theorem follows. 

It remains to show that the curve $h_i$ bounds a disk in $N_i$. First observe that in (the embedding of) $K_i$, each edge of $h_i$ is incident on only one triangle, and there is a point on an edge of $h_i$ with a positive distance from the complex minus a neighborhood of that point. This implies that there is a transverse sphere for
the triangle incident on that edge. If $C_i$ bounds an image of a disk in $N_i$, then it also bounds an immersed disk which intersects itself in disjoint points and transversely at those points in $N_i$.

Then $h_i$ will bound an immersed disk which consists of the immersed disk bounded by $C_i$ and image of the disk introducing the relation $C _i = h_i$.
Using some transverse spheres of the part of the disk incident on $h_i$ self-intersections of the disk bounded by $h_i$ can be removed by tubing and hence $h_i$ will bound an embedded disk.
We refer to \cite{FrQu90} Chapter 1 for definition of transverse spheres and how to remove intersections using them.

\section{Acknowledgements}
The author expresses his thanks to Herbert Edelsbrunner for his valuable suggestions regarding this research and to Arnaud de Mesmay for helpful discussions.

\bibliographystyle{plain}

\end{document}